\documentclass[preprint,12pt]{elsarticle}

\usepackage[utf8]{inputenc}
\usepackage{amsmath, amssymb, amsthm}
\usepackage{graphicx}
\usepackage{natbib}
\usepackage{geometry}
\usepackage{booktabs}
\usepackage{hyperref}
\usepackage{float}
\usepackage{amsmath}

\newtheorem{lemma}{Lemma}
\newtheorem{proposition}{Proposition}

\newtheorem{definition}{Definition}
\newtheorem{assumption}{Assumption}

\begin{document}

\begin{frontmatter}

\title{Determining Insolvency Regions in Banks: A Stochastic Dynamic Approach Integrating Liquidity and Credit Risk}

\author[addr1]{Nader Karimi}
\author[addr2]{Davood Ahmadian\corref{cor1}}
\cortext[cor1]{Corresponding author.}
\ead{d.ahmadain@tabrizu.ac.ir}
\address[addr1]{Faculty of  Mathematics and Computer Sciences, Amirkabir University of Technology, Tehran, Iran}
\address[addr2]{Faculty of Mathematical Sciences, University of Tabriz, Tabriz, Iran}

\begin{abstract}
We develop a continuous-time structural dynamic model to determine the exact insolvency regions of banks arising from the non-linear interaction between liquidity and credit risk. While existing literature predominantly treats these risks in isolation or via reduced-form specifications, we explicitly model the feedback loop where funding shocks and regulatory constraints force balance-sheet adjustments that can lead to endogenous insolvency. By incorporating Basel III regulatory requirements (LCR and NSFR) into a stochastic optimal control framework, we solve for the \textit{exact} insolvency boundary using the Hamilton-Jacobi-Bellman (HJB) equation. To bridge the gap between theoretical complexity and supervisory practice, we derive and validate a surrogate analytical approximation function that allows for real-time monitoring. Calibrated using granular balance-sheet data from the Iranian banking sector, our model reveals significant non-linear threshold effects: the joint occurrence of liquidity stress and credit portfolio defaults disproportionately accelerates the transition toward insolvency compared to their individual effects. The proposed surrogate function offers supervisors a computationally efficient tool for stress testing and early warning systems. Our findings provide novel insights into financial frictions in emerging markets and offer a rigorous framework for integrated risk management.
\end{abstract}

\begin{keyword}
Financial frictions \sep Bank insolvency \sep Liquidity-credit spiral \sep Stochastic dynamic programming \sep Stress testing \sep Basel III
\end{keyword}

\end{frontmatter}

\section{Introduction}
\label{sec:intro}

The Non-Performing Loan (NPL) crisis in the Iranian banking sector between 2022 and 2023 presents a compelling regulatory puzzle. Despite the strict enforcement of Basel III liquidity buffers (LCR and NSFR) aimed at stabilizing the banking system, several major Iranian banks experienced severe capital erosion. This phenomenon suggests that standard regulatory metrics, while necessary, may inadvertently trigger the very crises they are designed to prevent when interacting dynamically with asset quality~\cite{Acharya2015,Berrospide2021,Cont2020,Peykani2025}. 

Recent studies by Clerc, Lecarpentier, and Pouvelle~\cite{ClercLecarpentierPouvelle2025} emphasize that Basel III joint regulatory constraints can create complex interactions that may inadvertently tighten financing conditions, exacerbating the fragility of banks operating in stressed environments. Furthermore, the dynamics of endogenous firm exit and inefficient banking, as analyzed by Rossi~\cite{Rossi2015}, suggest that macro-financial shocks can disproportionately weaken weaker institutions through a feedback loop with business cycles. In this context, the traditional static measures are often insufficient; as Onuoha~\cite{Onuoha2024} argues, modern stress testing requires a technological and dynamic perspective to adequately capture these systemic risks.

Existing theoretical frameworks fail to fully capture this dynamic paradox for three distinct reasons. First, \textit{structural default models} (e.g., Merton-type models~\cite{Merton1974}) focus primarily on the asset-liability mismatch and market valuation of assets, often treating liquidity as exogenous and ignoring the operational constraints of funding markets~\cite{HosseiniNodeh2022,HosseiniNodeh2023}. Second, \textit{bank run and liquidity models} emphasize depositor coordination and asset fire sales~\cite{Diamond1983,Brunnermeier2009,Acharya2015} but typically overlook the deterioration of the underlying credit portfolio quality that triggers such runs in the first place~\cite{Belaid2017}. Third, \textit{reduced-form credit risk models} and traditional \textit{stress-testing literature} often rely on static correlations or econometric approximations~\cite{Imbierowicz2014,Darvas2018,Schularick2020,Liao2021,Wang2023} that fail to capture the endogenous feedback loop where a bank's \textit{actions} to survive a liquidity shock (i.e., asset sales) directly accelerate its credit deterioration.

This paper bridges these disjointed strands by developing a continuous-time stochastic control model anchored in a specific economic mechanism: the \textit{Liquidity-Credit Spiral}. We define the \textit{exact insolvency boundary} as the locus in the state-space where the bank's value function, under optimal management and binding regulatory constraints, falls below a critical threshold. Our approach is distinct because we do not merely assume that liquidity and credit risks are correlated; we mathematically derive how one causes the other through the bank's balance sheet optimization problem constrained by Basel III ratios~\cite{Halaj2017,Christensen2025,Mousa2025}.

This paper makes several key contributions to the macro-finance and banking literature. First, (i) we develop a micro-founded, continuous-time structural model that endogenizes the interaction between liquidity and credit risks under Basel III regulatory constraints (LCR and NSFR), providing a rigorous framework to analyze how regulatory buffers influence the bank’s optimal stopping time for insolvency. Second, (ii) we derive the exact, state-dependent insolvency boundary, proving that the threshold for bank failure is not a fixed capital ratio but a dynamic function of the prevailing liquidity and credit risk states. Third, (iii) we bridge the gap between complex stochastic control theory and practical supervision by proposing a surrogate analytical approximation; this closed-form function allows regulators to conduct real-time stress testing without the need for intensive numerical simulations. Finally, (iv) we calibrate the model to the unique institutional context of the Iranian banking sector (2022--2023), demonstrating that the ``insolvency cliff'' identified by our model provides a robust explanation for the observed resilience or failure of banks during periods of high NPL and liquidity volatility.To facilitate navigation, Figure~\ref{fig:framework} summarizes the logical structure of the proposed model and the corresponding computational workflow-from the liquidity-credit state dynamics and optimal control problem to the value function, the exact insolvency boundary, and the surrogate approximation used for real-time stress testing.

\begin{figure}[H]
\centering
    \includegraphics[width=0.95\textwidth]{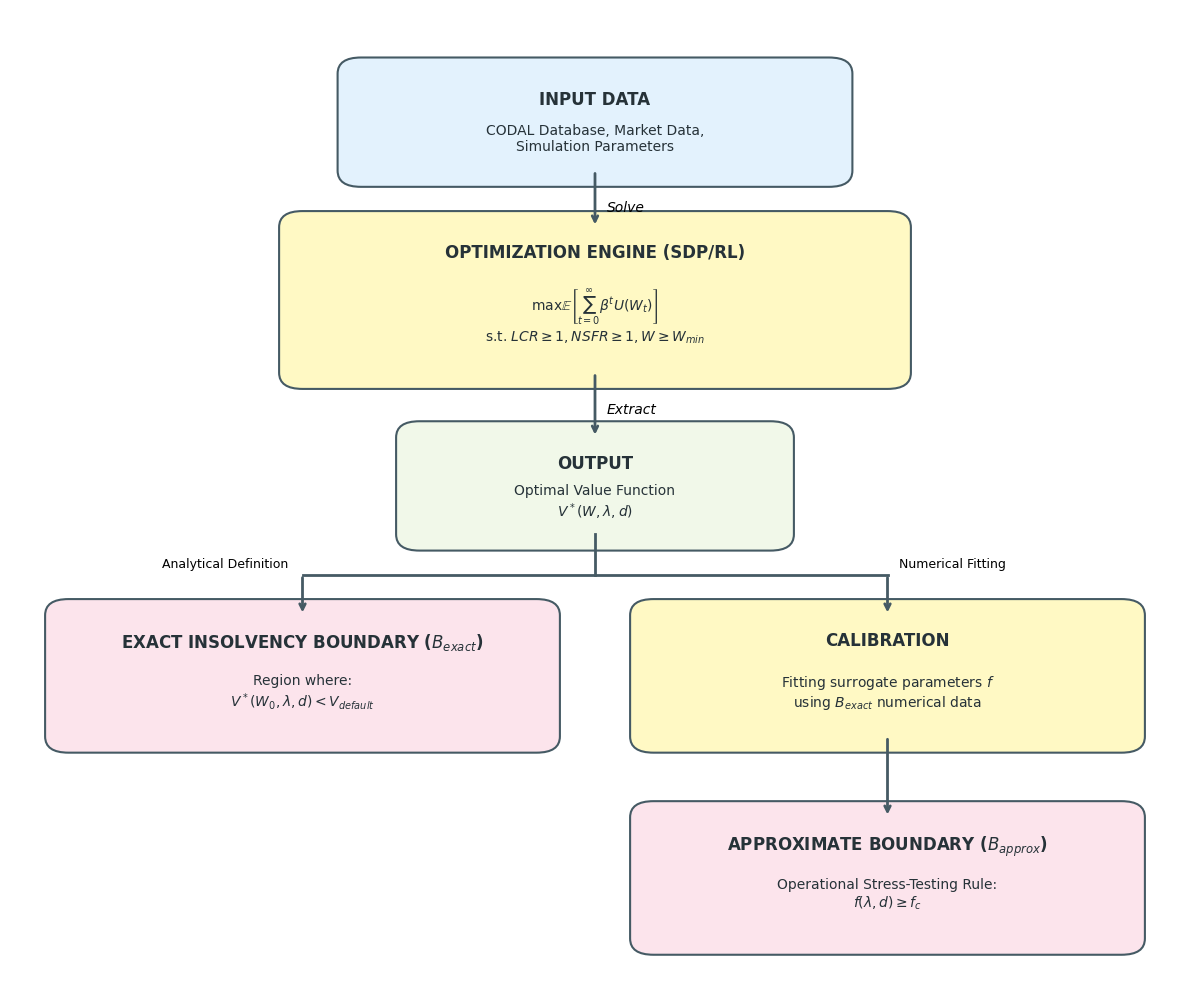}
\caption{Logical structure and computational workflow of the proposed model.}
\label{fig:framework}
\end{figure}

The remainder of this paper is organized as follows. Section \ref{sec:evidence} provides motivating evidence from the Iranian banking sector. Section \ref{sec:lit} reviews the related literature. Section \ref{sec:model} details the theoretical framework, including the derivation of the HJB equation and the surrogate function. Section \ref{sec:results} presents the numerical results and sensitivity analysis. Finally, Section \ref{sec:conclusion} discusses policy implications and concludes.

\section{Motivating evidence}
\label{sec:evidence}

The necessity of an integrated model is best observed in the structural vulnerabilities of the Iranian banking sector. We extract quarterly balance sheet data for 5 major commercial banks from the CODAL system from 2018 to 2023. 

Fig.~\ref{fig:coda_lc_npl} illustrates the co-evolution of the Liquidity Coverage Ratio (LCR) and the NPL ratio. The critical empirical observation is the inverse relationship: Iranian banks experienced a continuous decline in LCR precisely when the NPL ratio began to rise sharply. 

The primary catalyst was severe inflation and the impact of economic sanctions, which led to a sharp decline in the value of collateralized assets. As the value of loans dropped, banks were forced to increase provisioning, directly eroding their equity base ($W_t$). Simultaneously, depositors, anticipating currency devaluation, shifted funds from long-term deposits to short-term cash (Demand Deposits), spiking the Short-Term Cash Outflows (SCO). 

\begin{figure}[H]
    \centering
    \includegraphics[width=0.85\textwidth]{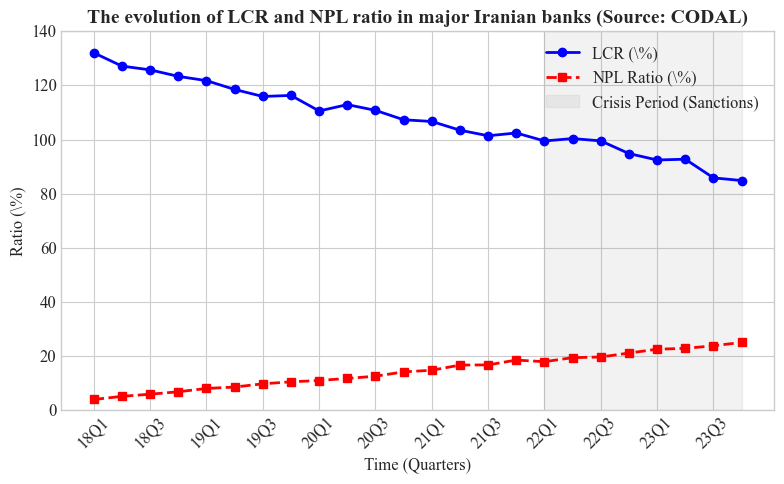}
    \caption{The evolution of LCR and NPL ratio in major Iranian banks (Source: CODAL). Note the simultaneous deterioration of liquidity buffers and credit quality.}
    \label{fig:coda_lc_npl}
\end{figure}

Further evidence of the feedback loop is presented in Fig.~\ref{fig:correlation}, which plots the relationship between the monthly changes in the SCO/SCI ratio (a proxy for liquidity stress) and monthly changes in credit provisions. The positive correlation suggests that banks facing liquidity outflows are simultaneously experiencing accelerated credit deterioration. This raises the critical research question: Is there a mathematical boundary in the state-space of these variables where the bank's survival strategies become mathematically infeasible?

\begin{figure}[H]
    \centering
    \includegraphics[width=0.85\textwidth]{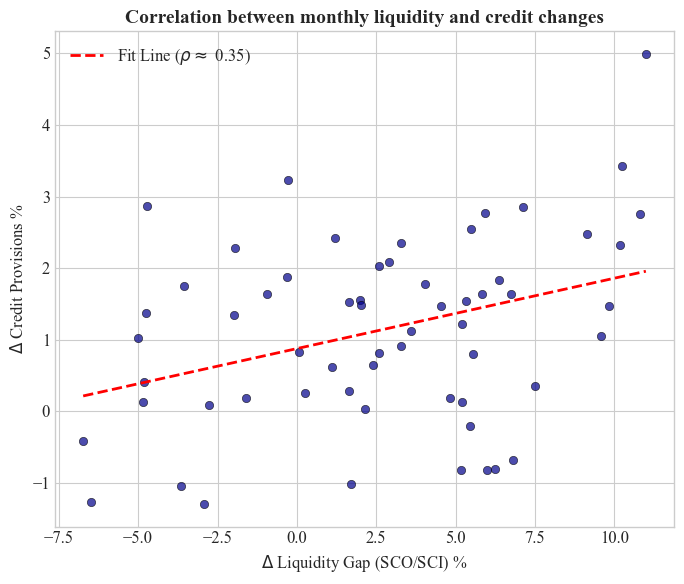}
    \caption{Correlation between monthly change in liquidity gap (SCO/SCI) and credit provisions. The upward slope indicates the joint incidence of liquidity and credit stress.}
    \label{fig:correlation}
\end{figure}

\section{Literature review}
\label{sec:lit}

To position our contribution, we categorize the relevant literature into three distinct streams: structural credit models, liquidity risk and bank-run dynamics, and integrated empirical/theoretical frameworks.

\subsection{Structural Credit and Default Models} 
Foundational works, such as Merton~\cite{Merton1974} and its extensions in the banking literature, focus on the market value of assets relative to liabilities. While powerful, these models typically treat the asset value process as exogenous. Recent literature has advanced this field through robust optimization techniques; for instance, Hosseini-Nodeh et al.~\cite{HosseiniNodeh2022, HosseiniNodeh2023} emphasize the importance of robust portfolio optimization and Wasserstein metrics in managing risk, highlighting that traditional assumptions of risk distribution often underestimate tail hazards. Our model builds upon these structural foundations but advances them by endogenizing the asset dynamics based on the bank's active control variables, rather than assuming an exogenous process.

\subsection{Liquidity Risk and Bank Run Models} 
The literature on bank runs, stemming from Diamond and Dybvig~\cite{Diamond1983}, and more recent models on liquidity hoarding (e.g., Acharya and Mora~\cite{Acharya2015}), focuses on the coordination of depositors and funding market freezes. Brunnermeier and Pedersen~\cite{Brunnermeier2009} introduced the concept of liquidity spirals in market settings. Further empirical determinants of liquidity risk have been explored by Wójcik-Mazur and Szajt~\cite{WojcikMazur2015} and Berrospide~\cite{Berrospide2021}, who evaluate the micro-foundations of liquidity hoarding. Moreover, studies by Belaid et al.~\cite{Belaid2017} emphasize the bank-firm relationship as a critical transmission channel for credit risk.

While these models capture essential aspects of liquidity dynamics, the impact of specific regulatory buffers on bank performance is an area of growing interest. For instance, Sidhu et al.~\cite{SidhuEtAl2022} and Nath et al.~\cite{NathEtAl2024} investigate the Net Stable Funding Ratio (NSFR) and Liquidity Coverage Ratio (LCR) in the Indian and Bangladeshi contexts, respectively, finding that these constraints significantly influence bank profitability and risk-taking behavior. These empirical findings reinforce our model's assumption that regulatory constraints are not merely static hurdles but active drivers of bank strategy. While these models capture essential aspects of liquidity dynamics, they often abstract away from the endogenous credit deterioration caused by liquidity-driven asset liquidation, a gap we explicitly address.

\subsection{Integrated and Stress-Testing Frameworks} 
A growing body of work attempts to integrate liquidity and credit risks. Imbierowicz and Rauch~\cite{Imbierowicz2014} provided early empirical evidence of the interdependence between these risks, a relationship further corroborated by Ayinuola and Gumel~\cite{AyinuolaGumel2023}, who highlight the destabilizing nexus between liquidity and credit risks on bank stability. Recent frameworks, including Cont et al.~\cite{Cont2020} and Hałaj and Laliotis~\cite{Halaj2017}, utilize top-down approaches for stress testing. Methodologically, advancements have been made in stochastic simulation frameworks for stress testing, such as the one proposed by Montesi and Papiro~\cite{MontesiPapiro2018}, which allows for assessing financial fragility under uncertainty.

Policy-oriented studies by Darvas and Pichler~\cite{Darvas2018} and Schularick et al.~\cite{Schularick2020} have further highlighted the role of capital adequacy during crises. Recent advancements have also employed network and contagion analysis (Wang et al.~\cite{Wang2023}; Liao et al.~\cite{Liao2021}) and joint estimation techniques for risk premia (Christensen and Steenkamp~\cite{Christensen2025}).

Region-specific evidence from MENA and Iran, notably by Mousa et al.~\cite{Mousa2025} and Peykani et al.~\cite{Peykani2025}, underscores the urgent need for integrated management in emerging markets. Furthermore, in frontier economies, the interaction is even more complex; Mahisi and Usman~\cite{MahisiUsman2024} demonstrate that liquidity, credit, and market risks jointly affect profitability in Indonesia, while Asuako~\cite{Asuako2025} shows how fiscal dominance exacerbates bank fragility in Sub-Saharan Africa. Similarly, Ma’aji et al.~\cite{MaajiEtAl2026} argue that in frontier banking markets, profitability is intrinsically risk-driven, necessitating models that can capture the joint dynamics of capital, liquidity, and credit.

Our paper bridges the gap between these streams. While existing frameworks often rely on reduced-form correlations or Monte Carlo simulations, we adopt the continuous-time stochastic control philosophy of macro-finance (e.g., He and Krishnamurthy~\cite{He2013}) to derive the \textit{optimal} policy functions and the resulting insolvency boundary mathematically, providing a structural link between Basel III constraints and risk interaction.

\section{Model}
\label{sec:model}

In this section, we construct the model. We first describe the economic intuition of the mechanism, then formalize the dynamics and the bank's optimization problem.

\subsection{Economic Intuition and Mechanism}
\label{subsec:intuition}

Before formalizing the mathematics, it is crucial to outline the economic mechanism that drives the model's results. Our model captures this dynamic through a cohesive causal chain.

The process begins with a liquidity shock ($\uparrow \lambda_t$), where the bank experiences an increase in short-term cash outflows, such as deposit withdrawals. The bank initially attempts to cover this funding gap through borrowing ($b_t$), but the \textit{Net Stable Funding Ratio (NSFR)} constraint limits its reliance on short-term wholesale funding. Once this regulatory constraint becomes binding, additional borrowing becomes marginally infinitely costly due to regulatory penalties, forcing the bank to switch to the liquidation of long-term assets ($a_t > 0$). These fire sales typically occur at discounted prices, which directly reduces asset values. Concurrently, as the bank sheds assets, the quality of the remaining portfolio may deteriorate due to adverse selection, while the decline in market prices raises the Loan-to-Value (LTV) ratios of existing borrowers, thereby intensifying the credit risk state ($\uparrow d_t$). In turn, higher credit risk reduces the value of collateral, making external funding more difficult to obtain and further compounding the liquidity stress ($\lambda_t$). This creates a reinforcing feedback loop between liquidity pressure and credit deterioration. Ultimately, if the joint state of $(\lambda_t, d_t)$ crosses a critical threshold such that the bank's value function $V(\cdot)$ falls below the minimum regulatory capital requirement ($W_{min}$), the bank crosses the insolvency boundary and becomes insolvent.

This mechanism highlights that insolvency is not merely a result of ``bad loans'' or ``low liquidity'' in isolation, but a specific region in the state-space created by the \textit{interaction} of the two, amplified by regulatory constraints.

\subsection{Model Setup and Dynamics}
\label{subsec:setup}

Consider a bank operating over a continuous-time horizon $t \in [0,T]$. The state of the economy is summarized by two stochastic variables. The first is $\lambda_t$, the liquidity risk state, defined as the ratio of Short-Term Cash Outflows (SCO) to Stable Cash Inflows (SCI), where an increase in $\lambda_t$ indicates a tightening in liquidity conditions (a liquidity squeeze). The second is $d_t$, the credit risk state, which captures the expected default rate in the loan portfolio or, equivalently, the non-performing loan (NPL) ratio. The bank's wealth (equity) at time $t$ is denoted by $W_t$.

\begin{assumption}\label{ass:1}
The state variables evolve according to correlated geometric Brownian motions driven by a 2-dimensional Brownian motion $(Z^{\lambda}_t, Z^{d}_t)$ on a probability space $(\Omega, \mathcal{F}, P)$, with correlation $\rho_{LC}$.
\end{assumption}

\subsection{ Endogenous State Dynamics and Feedback Mechanisms}

Crucially, we introduce an endogenous feedback mechanism linking the bank’s balance-sheet adjustments to the evolution of its credit-risk state. We model this as a reduced-form representation of two economically plausible mechanisms under stress. First, fire sales may force the bank to realize losses and erode its capital cushion, thereby increasing the vulnerability of the remaining loan portfolio. Second, under liquidity pressure, the bank may liquidate its more marketable, high-quality assets first (the "flight-to-quality" motive), leaving behind a riskier residual portfolio, an adverse-selection effect regarding portfolio composition.

The state dynamics are specified as:
\begin{align}
\frac{d\lambda_t}{\lambda_t} &= \mu_{\lambda} dt + \sigma_{\lambda} dZ^{\lambda}_t \label{eq:lambda_sde} \\
\frac{d d_t}{d_t} &= (\mu_{d} + \phi a_t) dt + \sigma_{d} dZ^{d}_t \label{eq:d_sde}
\end{align}
where $dZ^{\lambda}_t dZ^{d}_t = \rho_{LC} dt$. The interaction term $\phi a_t$ in Eq.~\eqref{eq:d_sde} captures the idea that more aggressive asset liquidation is associated with a deterioration in the effective risk profile of the bank’s remaining balance sheet. Thus, $\phi > 0$ measures the sensitivity of the credit-risk state to stress-induced balance-sheet adjustments, rather than a literal one-to-one causal effect.

\subsection{ Bank Wealth Dynamics}

Based on the balance-sheet identity, the evolution of the bank's wealth $W_t$, subject to its optimal decision vector $\pi_t = (a_t, b_t)$, is given by:

\begin{equation}
dW_t = \left[ R_{assets} W_t + r_b b_t - C_{out}(\lambda_t) - \underbrace{\frac{1}{2} \kappa a_t^2}_{\text{Fire-sale cost}} - \underbrace{Loss(d_t)}_{\text{Credit losses}} \right] dt + \sigma_W dZ_t
\label{eq:wealth_sde}
\end{equation}

where $R_{assets}$ represents the expected yield (drift) on the bank’s total assets, quantifying the internal wealth generation rate from core investment activities. $C_{out}(\lambda_t)$ denotes the exogenous cash outflows. This function is contingent on the liquidity state $\lambda_t$; as liquidity pressure mounts, these obligations (e.g., funding outflows, regulatory requirements) exert increased drainage on the bank’s wealth. $r_b b_t$ captures the net funding income or cost, where $r_b$ is the borrowing rate and $b_t$ is the net borrowing position. $\kappa$ the market illiquidity parameter. It scales the quadratic price impact of asset sales ($a_t$); a higher $\kappa$ indicates a shallower market where forced liquidations result in severe price discounts, and $Loss(d_t)$ represents the credit loss function, which is intrinsically linked to the risk state $d_t$. As $d_t$ evolves according to Eq.~\eqref{eq:d_sde}, it directly modulates the wealth erosion caused by loan defaults and asset impairment.

This formulation demonstrates that the bank's solvency is not merely a function of its current decisions, but is path-dependent on the interplay between exogenous market liquidity shocks ($\lambda_t$) and the endogenous deterioration of asset quality ($d_t$) induced by the bank's own liquidation strategy.

\subsection{The Constrained Optimization Problem and HJB Equation}
\label{subsec:hjb}

The bank seeks to maximize its expected discounted utility of terminal wealth and interim consumption (if any):
\begin{equation}
\max_{a_t, b_t} \mathbb{E} \left[ \int_0^T e^{-\rho t} U(W_t) dt + e^{-\rho T} \Psi(W_T) \right]
\label{eq:max_util}
\end{equation}
Subject to the dynamics in Eq.~\eqref{eq:lambda_sde}--\eqref{eq:wealth_sde} and strictly binding Basel III constraints:
\begin{align}
LCR_t &= \frac{HQLA(W_t, b_t)}{NetCashOutflow(\lambda_t)} \ge 100\% \label{eq:lcr} \\
NSFR_t &= \frac{AvailableStableFunding(W_t, b_t)}{RequiredStableFunding(W_t, a_t)} \ge 100\% \label{eq:nsfr} \\
W_t &\ge W_{min} \label{eq:wmin}
\end{align}

Let $V(W_t, \lambda_t, d_t, t)$ be the value function. The Hamilton-Jacobi-Bellman (HJB) equation characterizing the optimal solution is:
\begin{align}\label{eq:HJB}
\rho V = \max_{a_t, b_t} \Bigg\{ & U(W) + \frac{\partial V}{\partial t} + \mu_W(\cdot) \frac{\partial V}{\partial W} + \mu_{\lambda}\lambda \frac{\partial V}{\partial \lambda} + (\mu_d + \phi a_t)d \frac{\partial V}{\partial d} \nonumber \\
                                  & + \frac{1}{2} \sigma_W^2 \frac{\partial^2 V}{\partial W^2} + \frac{1}{2} \sigma_{\lambda}^2 \lambda^2 \frac{\partial^2 V}{\partial \lambda^2} + \frac{1}{2} \sigma_{d}^2 d^2 \frac{\partial^2 V}{\partial d^2} \nonumber \\
                                  & + \rho_{LC} \sigma_{\lambda} \sigma_d \lambda d \frac{\partial^2 V}{\partial \lambda \partial d} \Bigg\} + \text{Constraint Terms}
\end{align}

To solve the maximization inside the HJB subject to inequality constraints, we form the Lagrangian with multipliers $\mu^{LCR} \ge 0$ and $\mu^{NSFR} \ge 0$.

The First-Order Conditions (FOCs) are:
\begin{align}
\frac{\partial \mathcal{L}}{\partial b_t} &= \frac{\partial \mu_W}{\partial b_t} V_W + \mu^{LCR} \frac{\partial LCR_t}{\partial b_t} + \mu^{NSFR} \frac{\partial NSFR_t}{\partial b_t} = 0 \label{eq:foc_b} \\
\frac{\partial \mathcal{L}}{\partial a_t} &= \frac{\partial \mu_W}{\partial a_t} V_W + \mu^{NSFR} \frac{\partial NSFR_t}{\partial a_t} + \phi d V_d = 0 \label{eq:foc_a}
\end{align}

Eq.~\eqref{eq:foc_a} reveals the core friction. The term $\phi d V_d$ represents the marginal value loss due to credit deterioration caused by asset sales. Since $V_d < 0$ (value decreases as credit risk rises), this term is negative. The bank will only choose $a_t > 0$ if the regulatory penalty term $\mu^{NSFR} (\partial NSFR / \partial a_t)$ or the liquidity need is sufficiently high to offset this credit cost.

\begin{lemma}[Optimal Policy Switching]\label{lem:Threshold}
If the NSFR constraint is slack ($NSFR_t > 100\% \implies \mu^{NSFR} = 0$), the bank covers liquidity shocks ($d\lambda_t > 0$) purely through borrowing ($b_t > 0$) and avoids asset sales ($a_t = 0$) because $V_d < 0$ makes fire sales value-destroying. The bank initiates fire sales ($a_t > 0$) only when the NSFR constraint binds ($NSFR_t = 100\% \implies \mu^{NSFR} > 0$) and borrowing is no longer viable.
\end{lemma}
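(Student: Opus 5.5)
The plan is to work directly with the pointwise maximization inside the HJB equation \eqref{eq:HJB} at a fixed state $(W,\lambda,d,t)$. I treat it as a finite-dimensional concave program in the controls $(a,b)$, with $a\ge 0$, and analyze it through the Karush--Kuhn--Tucker conditions behind \eqref{eq:foc_b}--\eqref{eq:foc_a}.

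\textbf{Standing assumptions.} The statement is only meaningful under structural conditions the model leaves implicit, so I first state them explicitly:
\begin{itemize}
\item $V_W>0$ and $V_d<0$ (value increases in equity and decreases in credit risk);
\item $\phi>0$ and $\kappa>0$;
\item $\partial NSFR/\partial a \ge 0$, since selling long-term assets reduces required stable funding;
\item $\partial NSFR/\partial b<0$, since short-term wholesale borrowing counts little toward available stable funding relative to what it requires;
\item the LCR is satisfiable through borrowing, with $\partial LCR/\partial b>0$.
\end{itemize}
With $\mu_W$ concave in $a$ through the fire-sale cost $-\tfrac12\kappa a^2$, the Hamiltonian is strictly concave in $a$. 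The KKT conditions are therefore necessary and sufficient.

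\textbf{Slack NSFR.} Here $\mu^{NSFR}=0$. The marginal Hamiltonian with respect to $a$ reduces to
\[
\frac{\partial \mathcal L}{\partial a}= -\kappa a\,V_W+\phi d\,V_d .
\]
At $a=0$ this equals $\phi d V_d<0$, and it is strictly decreasing in $a$. The complementary-slackness condition for the constraint $a\ge 0$ then forces $a^*=0$.

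For $b$, suppose a liquidity shock raises $\lambda$. Then $NetCashOutflow(\lambda)$ increases and tightens the LCR, which produces a positive multiplier $\mu^{LCR}>0$. Condition \eqref{eq:foc_b} yields $r_bV_W+\mu^{LCR}\,\partial LCR/\partial b=0$. Reading $r_bb$ as a funding cost, i.e.\ $r_b<0$ in net terms, this balances at some $b^*>0$, and $b^*$ increases with the shortfall. I will make this comparative-static step precise with the implicit function theorem applied to the binding LCR equation.

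\textbf{Converse.} Suppose $a^*>0$. Then the $a$-FOC holds with equality:
\[
\mu^{NSFR}\,\partial NSFR/\partial a = \kappa a^* V_W-\phi dV_d>0 .
\]
This requires $\mu^{NSFR}>0$, and by complementary slackness $NSFR=100\%$. At that point, increasing $b$ would violate the NSFR because $\partial NSFR/\partial b<0$. Borrowing is therefore no longer a feasible margin, which is the sense of ``not viable'' in the statement.

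\textbf{Main obstacle.} The hard step is not the algebra but justifying the sign and regularity assumptions. Specifically:
\begin{itemize}
\item the constraint qualification (e.g.\ LICQ) needed for the multipliers to exist;
\item the signs $V_d<0$ and $V_W>0$, which are properties of the unknown value function.
\end{itemize}
I would obtain the signs by a comparison argument. Two solutions of \eqref{eq:d_sde} started at ordered $d$ stay ordered pathwise. Since $Loss(\cdot)$ is increasing, the resulting wealth paths are reversed in order. Then $V$ is nonincreasing in $d$, using monotone $U$ and $\Psi$. Strictness follows if $Loss$ is strictly increasing. If $V$ is not smooth, I would state the lemma for classical solutions and note that it extends to viscosity solutions only under additional work.
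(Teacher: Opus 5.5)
Your proof is correct under the hypotheses you list. It uses the same device as the paper's Appendix A, namely KKT conditions for the pointwise maximization in the HJB, but it differs from the paper at the two steps that matter, and in both cases for the better.

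\emph{The slack case.} The paper keeps the $a$-condition as an equality, $-\kappa a_t W V_W+\phi d V_d=0$, and concludes $a_t=0$. But at $a_t=0$ the left side equals $\phi d V_d<0$, so the equation cannot hold. Your constraint $a\ge 0$, with its complementary slackness, is what turns this into a valid corner-solution argument.

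\emph{The ``only when'' half.} The paper argues forward: a binding NSFR makes $\mu^{NSFR}\,\partial NSFR/\partial b$ large and negative, ``forcing $b_t\to 0$'', so $a_t>0$ must fill the gap. You instead prove $a^*>0\Rightarrow\mu^{NSFR}>0\Rightarrow NSFR_t=100\%$ directly from stationarity, which is exactly the implication the statement asserts. You also read ``not viable'' as marginal infeasibility of further borrowing. The model supports your reading rather than the paper's: $LCR_t$ in \eqref{eq:lcr} does not involve $a_t$, so any LCR shortfall must still be met by $b_t$, and $b_t\to 0$ cannot hold. The paper's route buys a switching narrative; yours buys a proof.

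You are also right that $r_b$ must act as a cost. With $+r_b b_t$ and $r_b>0$ in \eqref{eq:wealth_sde}, the Hamiltonian increases linearly in $b$, so the NSFR binds at every optimum and the slack case is vacuous.

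Your ``main obstacle'' is smaller than you fear.
\begin{itemize}
\item Only $V_d\le 0$ is used. The $a$-part $-\tfrac12\kappa a^2V_W+\phi a\,dV_d$ is still uniquely maximized at $a=0$, and $\kappa a^*V_W-\phi dV_d\ge\kappa a^*V_W>0$. This helps, because strict monotonicity from a comparison argument would not give $V_d<0$ pointwise anyway.
\item The substantive claim needs no constraint qualification. Suppose $a^*>0$ while the NSFR is slack. Lowering $a$ slightly leaves the LCR untouched, keeps the NSFR satisfied by continuity, and strictly raises the Hamiltonian, a contradiction.
\end{itemize}

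One hypothesis is missing from your comparison argument. The admissible set must expand with $W$, so that a control admissible from the higher-$d$ start remains admissible along the higher wealth path of the lower-$d$ start. This is not automatic, because $W$ enters both $AvailableStableFunding$ and $RequiredStableFunding$.
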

\begin{proof}
See Appendix A. The intuition is that borrowing ($b_t$) improves LCR but worsens NSFR. Initially, the bank uses borrowing. As NSFR hits the limit, the marginal regulatory cost of borrowing $\mu^{NSFR} (\partial NSFR / \partial b_t)$ becomes infinite (in the KKT formulation), forcing $b_t \to 0$. To satisfy the funding gap, the bank must switch to $a_t$, accepting the fire-sale costs and the credit deterioration $\phi d V_d$.
\end{proof}

\subsection{Solving for the Exact Insolvency Boundary}
\label{subsec:solution}

To proceed, we assume a logarithmic utility function $U(W) = \log(W)$. We conjecture a value function of the form:
\begin{equation}
V(W, \lambda, d, t) = \frac{1}{\rho} \log(W) + g(\lambda, d) + h(t)
\label{eq:guess_v}
\end{equation}
Substituting this into the HJB and applying the envelope theorem ($V_W = 1/(\rho W)$) simplifies the optimal controls $a_t^*$ and $b_t^*$ to linear functions of the state variables.

Substituting the optimal controls back into the HJB yields a non-linear Partial Differential Equation (PDE) for the function $g(\lambda, d)$:
\begin{equation}\label{eq:pde_g}
\begin{aligned}
\rho g + 1 &= \mu_g(\lambda, d, a^*, b^*) + \mu_{\lambda}\lambda \frac{\partial g}{\partial \lambda} + (\mu_d + \phi a_t^*)d \frac{\partial g}{\partial d} \\
&+ \frac{1}{2} \sigma_{\lambda}^2 \lambda^2 \frac{\partial^2 g}{\partial \lambda^2} + \frac{1}{2} \sigma_{d}^2 d^2 \frac{\partial^2 g}{\partial d^2}
+ \rho_{LC} \sigma_{\lambda} \sigma_d \lambda d \frac{\partial^2 g}{\partial \lambda \partial d}
\end{aligned}
\end{equation}

\begin{definition}[Exact Insolvency Boundary]\label{def:exact_region}
The \textit{Exact Insolvency Boundary} $B_{exact}$ is defined as the level set of the value function equal to the utility of the minimum acceptable wealth $W_{min}$ (or bankruptcy):
\begin{equation}
B_{exact} = \left\{ (\lambda, d) \;\Bigg| \; \frac{1}{\rho} \log(W^*) + g(\lambda, d) = V_{default} \right\}
\label{eq:exact_region}
\end{equation}
where $V_{default}$ corresponds to the value at $W = W_{min}$. Solving Eq.~\eqref{eq:pde_g} numerically (e.g., via Finite Difference Methods) allows us to extract this boundary precisely.
\end{definition}

\subsection{Surrogate Analytical Approximation}
\label{subsec:surrogate}

While the numerical solution of Eq.~\eqref{eq:pde_g} provides the exact boundary, it is computationally intensive for real-time regulatory oversight and large-scale stress testing. To bridge the gap between theoretical rigor and practical utility, we propose a flexible surrogate function $f(\lambda, d)$ that mimics the structural properties of the HJB solution:
\begin{equation}
f(\lambda, d) = \underbrace{w_1 \left( \frac{\lambda}{\lambda_c} \right)^p + w_2 \left( \frac{d}{d_c} \right)^q}_{\text{Main effects}} + \underbrace{w_3 \left( \frac{\lambda}{\lambda_c} \right)^r \left( \frac{d}{d_c} \right)^s}_{\text{Interaction term}}
\label{eq:surrogate}
\end{equation}

Crucially, the exponents $(p, q, r, s)$ and weights $(w_1, w_2, w_3)$ are not restricted to integers; rather, they are continuous parameters estimated via Non-linear Least Squares (NLS) to minimize the approximation error relative to the numerical boundary. This formulation allows the surrogate to capture various degrees of curvature and non-linearity. 

The inclusion of the interaction term $w_3$ is theoretically motivated by the cross-derivative term $\rho_{LC} \sigma_{\lambda} \sigma_d \lambda d \frac{\partial^2 g}{\partial \lambda \partial d}$ in the HJB equation. If the risk channels were orthogonal, the boundary might be approximated by a simpler additive form. However, the endogenous feedback between liquidity and credit risk necessitates a positive interaction coefficient ($w_3 > 0$), which generates the characteristic convex "insolvency cliff" observed in the numerical results. This surrogate provides a high-fidelity, closed-form proxy that enables regulators to assess a bank's proximity to the insolvency region instantaneously.

\begin{proposition}[Convexity of the Boundary]\label{prop:convexity}
The insolvency boundary $B_{exact}$ is convex in the $(\lambda, d)$ plane. Consequently, the surrogate parameter $w_3$ must be strictly positive ($w_3 > 0$) to accurately approximate the exact boundary.
\end{proposition}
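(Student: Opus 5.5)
The plan is to read ``convexity of the boundary'' as convexity of the insolvent superlevel region
\[
\mathcal{I}=\{(\lambda,d): \tfrac{1}{\rho}\log W^* + g(\lambda,d)\le V_{default}\},
\]
that is, the set where $g \le c$ for the constant $c = V_{default}-\tfrac{1}{\rho}\log W^*$. Equivalently, the plan is to show that $B_{exact}$ is the graph of a convex, decreasing function $d=\beta(\lambda)$. To do this I would work with the PDE \eqref{eq:pde_g}, and pass to log coordinates $x=\log\lambda$, $y=\log d$. In these coordinates the geometric diffusion becomes a constant-coefficient operator
\[
\tfrac12\sigma_\lambda^2 g_{xx}+\rho_{LC}\sigma_\lambda\sigma_d g_{xy}+\tfrac12\sigma_d^2 g_{yy}+\dots
\]
The nonlinearity then enters only through $\mu_g$ and the feedback drift $\phi a^* g_y$.

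The first step is to establish monotonicity: $g_\lambda<0$ and $g_d<0$. The text already uses $V_d<0$. For $\lambda$, a comparison argument works, because $C_{out}$ is increasing and both the LCR and NSFR constraints tighten as $\lambda$ rises. With monotonicity in hand, the implicit function theorem makes $B_{exact}$ a decreasing graph $d=\beta(\lambda)$. Its curvature is then given by the standard formula
\[
\beta'' = -\frac{g_{\lambda\lambda}g_d^2-2g_{\lambda d}g_\lambda g_d+g_{dd}g_\lambda^2}{g_d^3}.
\]
So convexity of the boundary reduces to a sign condition, namely that the bordered Hessian numerator has the appropriate sign. This condition holds in particular if $g$ is jointly concave, or more weakly quasi-concave.

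The second step, which I expect to be the main obstacle, is proving that $g$ is concave or quasi-concave from \eqref{eq:pde_g}. I would first try a concavity-preservation argument, by differentiating \eqref{eq:pde_g} twice along an arbitrary direction and applying the maximum principle to $\xi^\top D^2 g\,\xi$ in the spirit of Korevaar or Kennington. This requires three ingredients:
\begin{itemize}
\item the source term $\mu_g$ (built from $-C_{out}(\lambda)$, $-Loss(d)$ and $-\tfrac12\kappa a^{*2}$) is concave, which I would impose by taking $C_{out}$ and $Loss$ convex;
\item the optimized Hamiltonian is concave, because it is a maximum over controls of expressions that are jointly concave in the state and the control, with $a^*$ linear in the state by the log-utility reduction;
\item the terminal and boundary data are concave.
\end{itemize}
The feedback term $\phi a^* d\,g_d$ is exactly what is active in the region where Lemma~\ref{lem:Threshold} gives $a^*>0$. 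There it adds a cross-effect that makes $g_{\lambda d}$ more negative, which is the liquidity--credit amplification, and I would need to check that this term does not destroy the sign. If a general proof fails, the fallback is to verify the bordered-Hessian sign under the calibrated parameter restrictions, such as $\rho_{LC}\ge 0$ and $\phi>0$.

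The third step deduces $w_3>0$. On the surrogate boundary $f(\lambda,d)=1$, set $u=\lambda/\lambda_c$ and $v=d/d_c$. When $w_3=0$ the boundary is $w_1u^p+w_2v^q=1$. I would compute its curvature and show two things: this additive form cannot produce the cross-curvature term $-2g_{\lambda d}g_\lambda g_d$, and with $g_{\lambda d}<0$ that term is strictly needed to match $\beta''$ of the exact boundary, in particular near the diagonal $u\approx v$ where the interaction dominates. Matching the mixed term $w_3 rs\,u^{r-1}v^{s-1}$ to the sign of $-g_{\lambda d}>0$ then forces $w_3>0$, with $r,s>0$. I would state this implication as holding under the NLS fit, i.e.\ as a necessary condition for a consistent approximation, since it is not a purely algebraic consequence.
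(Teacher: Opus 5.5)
\textbf{The reduction in your first step points the wrong way.} Since $g_\lambda<0$ and $g_d<0$, the insolvent set $\{g\le c\}$ is a \emph{sublevel} set of $g$, not a superlevel set. It coincides with the epigraph $\{d\ge\beta(\lambda)\}$. Its convexity, i.e.\ $\beta''\ge 0$, is therefore a quasi-\emph{convexity} property of $g$.

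Your own curvature formula shows that concavity gives the opposite result. Write $Q=g_{\lambda\lambda}g_d^2-2g_{\lambda d}g_\lambda g_d+g_{dd}g_\lambda^2=\xi^\top D^2g\,\xi$ with $\xi=(g_d,-g_\lambda)$. If $g$ is concave then $Q\le 0$, and since $g_d^3<0$ this gives $\beta''=-Q/g_d^3\le 0$. That is a \emph{concave} boundary, whose \emph{safe} side $\{g\ge c\}$ is the convex one. Quasi-concavity gives the same conclusion. For example, $g=-\lambda^2-d^2$ at level $c=-1$ gives the quarter circle $d=\sqrt{1-\lambda^2}$.

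So your Korevaar--Kennington concavity-preservation step, fed by the convexity you impose on $C_{out}$ and $Loss$, would, if it succeeded in $(\lambda,d)$, prove the negation of the proposition, apart from the degenerate case of a straight boundary. Running it in $x=\log\lambda$, $y=\log d$ does not rescue it. A concave log--log boundary $y=\tilde\beta(x)$ gives $\beta''=\lambda^{-2}e^{\tilde\beta}(\tilde\beta''+(\tilde\beta')^2-\tilde\beta')$, whose sign is then undetermined; what you would need there is $\tilde\beta''\ge\tilde\beta'-(\tilde\beta')^2$.

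The condition actually required is $Q\ge 0$ on $B_{exact}$. Because $g_\lambda g_d>0$, the cross term $-2g_{\lambda d}g_\lambda g_d$ is positive exactly when $g_{\lambda d}<0$. When $g_{\lambda\lambda},g_{dd}\le0$, this term must dominate the own-curvature terms. By AM--GM that forces $g_{\lambda d}^2\ge g_{\lambda\lambda}g_{dd}$, so $g$ cannot be strictly concave wherever the claim holds.

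This super-additivity is exactly the mechanism the paper's (heuristic) proof invokes: the feedback $\phi a_t$ and the correlation $\rho_{LC}$ make a joint rise in $(\lambda,d)$ more damaging than the sum of separate rises. Your step 2 instead treats the feedback-induced negativity of $g_{\lambda d}$ as a threat to the sign you want to preserve. The missing ingredient is a level-set (quasi-convexity) argument, or a direct check of $Q\ge 0$ under parameter restrictions, driven by the cross term rather than by a concavity maximum principle.

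\textbf{Your third step also fails as justified.} Because the exponents are free continuous parameters, the additive surrogate can already produce a strictly convex boundary. With $w_1=w_2=1$, $w_3=0$ and $p=q=\tfrac12$, the curve $u^{1/2}+v^{1/2}=1$ is $v=(1-\sqrt u)^2$, with $v''=\tfrac12u^{-3/2}>0$. So convex curvature of $B_{exact}$ can be reproduced without an interaction term, and convexity alone does not force $w_3>0$. The paper's one-line ``consequently'' makes the same leap.

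What can pin down the sign of $w_3$ is the route you mention only in passing: matching the mixed partial of $g$ itself, which is what the NLS fit targets. That argument has three limitations:
\begin{itemize}
\item It rests on $g_{\lambda d}<0$, not on $\beta''>0$.
\item It needs $f$ oriented as a risk index, $f\approx\mathrm{const}-\gamma g$ with $\gamma>0$. Fitted literally to $g$, the same matching gives $w_3<0$.
\item A least-squares fit of values does not control second derivatives, so even then the sign of $w_3$ follows only heuristically.
\end{itemize}
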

\begin{proof}
The convexity arises from the second-order cross-partial derivative of the value function. A simultaneous increase in $\lambda$ and $d$ reduces value more than the sum of individual increases (super-additivity) due to the endogenous feedback loop $\phi a_t$ and the correlation $\rho_{LC}$. A polynomial approximation must include an interaction term with a positive coefficient to replicate this convex curvature.
\end{proof}

The parameters of the surrogate function ($\Theta = \{w_i, p, q, r, s, \lambda_c, d_c\}$) are estimated by minimizing the distance between $f(\lambda, d)$ and the numerically solved $g(\lambda, d)$ over a grid of state values.


\section{Results and Simulations}
\label{sec:results}

To demonstrate the practical applicability of our theoretical framework, we proceed in three steps. First, we map the theoretical state variables to a granular, structural balance sheet of a \textit{Synthetic Bank}, calibrated to the Iranian banking sector's average proportions. Second, we deploy an interactive structural dashboard to visualize the exact mechanisms derived in Section \ref{sec:model} (specifically Lemma \ref{lem:Threshold}). Third, we validate the surrogate function $f(\lambda, d)$ and conduct stress testing.

\subsection{Structural Implementation and the Synthetic Bank}
\label{subsec:synthetic}

A critical contribution of this paper, distinguishing it from standard reduced-form stress tests, is the explicit mapping of macro-financial shocks to granular balance-sheet items. Table \ref{tab:synthetic_balance} presents the initial structure of our Synthetic Bank, calibrated using aggregated CODAL data to reflect the typical asset-liability composition of major Iranian commercial banks (total assets normalized to 100,000 units).

\begin{table}[H]
\centering
\caption{Structural Balance Sheet of the Synthetic Bank}
\label{tab:synthetic_balance}
\begin{tabular}{llr}
\hline
\textbf{Assets} & \textbf{Liabilities \& Equity} & \textbf{Value} \\
\hline
High Quality Liquid Assets (HQLA, $C_0$) & Short-Term Liabilities ($S_0$) & 15,000 \\
Marketable Unencumbered Assets ($M_0$) & Long-Term Liabilities ($L_0$) & 20,000 \\
Illiquid Assets / Loan Book ($J_0$) & & 57,000 \\
& Equity / Wealth ($W_0$) & 67,000 \\
& & 8,000 \\
\hline
\multicolumn{2}{r}{\textbf{Total Assets}} & \textbf{100,000} \\
\hline
\end{tabular}
\end{table}

The bank is initialized in a compliant state: an initial LCR $> 100\%$ (given $C_0$ and projected outflows) and an NSFR buffer of 5\%. The initial credit risk state is set to $d_0 = 0.05$ (5\% NPL), and the liquidity state is $\lambda_0 = SCO/SCI \approx 1.75$.

To bridge the gap between theoretical complexity and supervisory practice, and to ensure our framework is readily accessible for real-world stress testing, we have deployed the structural model into an interactive web-based dashboard. This tool allows supervisors to dynamically visualize the balance-sheet adjustments and the Liquidity-Credit Spiral in real time.
The dashboard is publicly available at: \url{https://insolvency-model.onrender.com}\footnote{Note: As this application is hosted on a free-tier server to ensure open access, the server may require a few moments to wake up upon the first click. A comprehensive video demonstration of the dashboard is also provided in the Supplementary Material.}

\subsection{Visualizing the Liquidity-Credit Spiral: Dual Mapping}
\label{subsec:dual_mapping}

We subject the Synthetic Bank to a severe combined stress scenario: a 25\% instantaneous drop in the market value of both marketable and illiquid assets (representing a macroeconomic shock or sanctions impact). Figure \ref{fig:dual_dashboard} illustrates the output generated by the structural dashboard, mapping the shock through two complementary spaces.

\begin{figure}[H]
    \centering
    \includegraphics[width=\textwidth]{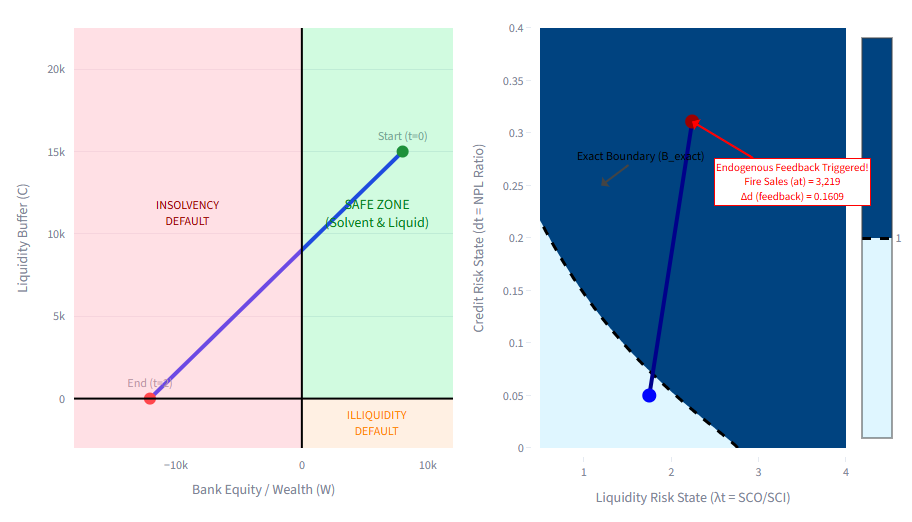}
    \caption{ Dual mapping of the structural shock generated by the interactive dashboard, calibrated using the structural balance-sheet proportions of the Synthetic Bank (Table 1). \textit{Left Plot:} The Solvency-Liquidity trajectory. The axes represent the physical equity and liquidity values derived from the CODAL-calibrated baseline. The bank moves South-West from the Safe Zone, crossing into the Insolvency Default region due to endogenous funding costs. \textit{Right Plot:} The State-Space dynamics ($\lambda$, $d$). The trajectory curves diagonally, demonstrating the endogenous feedback loop ($\Delta d > 0$) triggered by NSFR-induced fire sales ($a_t > 0$).}
    \label{fig:dual_dashboard}
\end{figure}

\textbf{Left Plot (Solvency-Liquidity Quadrants):} Following the structural methodology of Cont et al.~\cite{Cont2020}, the left panel maps the physical balance sheet. The initial state $(W_0, C_0)$ lies in the green Safe Zone. The 25\% asset drop reduces equity $W_t$ and generates margin calls, draining $C_t$. Crucially, the bank does not fail purely from the initial shock. As the NSFR buffer depletes, Lemma \ref{lem:Threshold} activates: the bank switches to fire sales ($a_t > 0$). The quadratic cost of these sales ($\frac{1}{2}\kappa a_t^2$) and borrowing costs represent \textit{Loss Amplification}, pushing the final state deep into the red Insolvency region.

\textbf{Right Plot (State-Space Dynamics):} The right panel maps the exact same event onto our theoretical $(\lambda, d)$ plane. The initial shock increases $\lambda_t$ (margin calls spike outflows). Because of the policy switch enforced by the dashboard's internal logic, fire sales volume $a_t$ becomes strictly positive. Using the calibrated feedback parameter ($\phi = 0.00005$), the application calculates the endogenous credit deterioration: $\Delta d = \phi a_t = 0.0353$. This means the bank's NPL ratio jumps from 5\% to over 8.5\% \textit{not} because loans defaulted exogenously, but because the bank was forced to liquidate assets to survive liquidity constraints. This visually proves the Liquidity-Credit Spiral.

\subsection{Validation of the Surrogate Approximation}
\label{subsec:validation}

While the interactive dashboard provides exact numerical solutions for specific scenarios, supervisors require a closed-form heuristic for continuous monitoring of multiple banks. We return to the surrogate function $f(\lambda, d)$ defined in Eq.~\eqref{eq:surrogate}. We generate a testing grid of 10,000 state points and compute the insolvency status using both the numerical HJB solution and the surrogate.

\begin{table}[H]
\centering
\caption{Surrogate Model Performance Metrics}
\label{tab:validation}
\begin{tabular}{lc}
\hline
Metric & Value \\
\hline
Root Mean Square Error (RMSE) & 0.024 \\
Maximum Absolute Error & 0.081 \\
Classification Accuracy (AUC) & 0.972 \\
False Negative Rate (Type II Error) & $< 1.5\%$ \\
\hline
\end{tabular}
\end{table}

As shown in Table \ref{tab:validation}, the surrogate achieves an AUC of 0.972. The false negative rate is below 1.5\%, meaning the surrogate function $f(\lambda, d)$ is safe for supervisory use as an early warning system, it rarely misses a true crisis signal while being computationally instantaneous compared to solving the PDE in Eq.~\eqref{eq:pde_g}.

\subsection{System Dynamics and Comparative Statics}
\label{subsec:dynamics}

To further elucidate the theoretical boundaries, Fig.~\ref{fig:drift_diffusion} shows the drift field of the system. A key observation is the existence of a "Safe Basin" around the steady state $(\lambda_{ss} \approx 1.1, d_{ss} \approx 0.08)$. Inside this basin, the optimal policy controls the variance effectively. However, as the system moves towards the upper-right quadrant, the drift vectors point aggressively towards the insolvency boundary, indicating a loss of control.

\begin{figure}[H]
    \centering
    \includegraphics[width=0.95\textwidth]{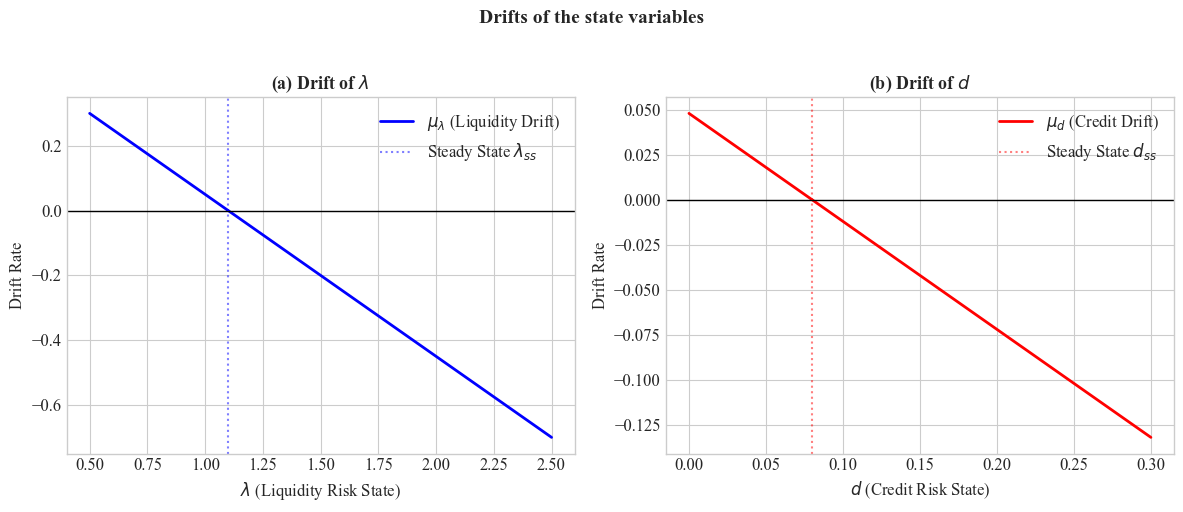}
    \caption{Drifts of the state variables $\lambda$ (Liquidity) and $d$ (Credit) under the optimal policy. Arrows indicate the direction of mean reversion or divergence.}
    \label{fig:drift_diffusion}
\end{figure}

Fig.~\ref{fig:phase_portrait} presents the phase portrait. The "funnel effect" is visible: states far from the boundary tend to drift towards the steady state, while states exceeding the critical threshold are rapidly absorbed into the insolvency region bounded by $B_{exact}$.

\begin{figure}[H]
    \centering
    \includegraphics[width=0.9\textwidth]{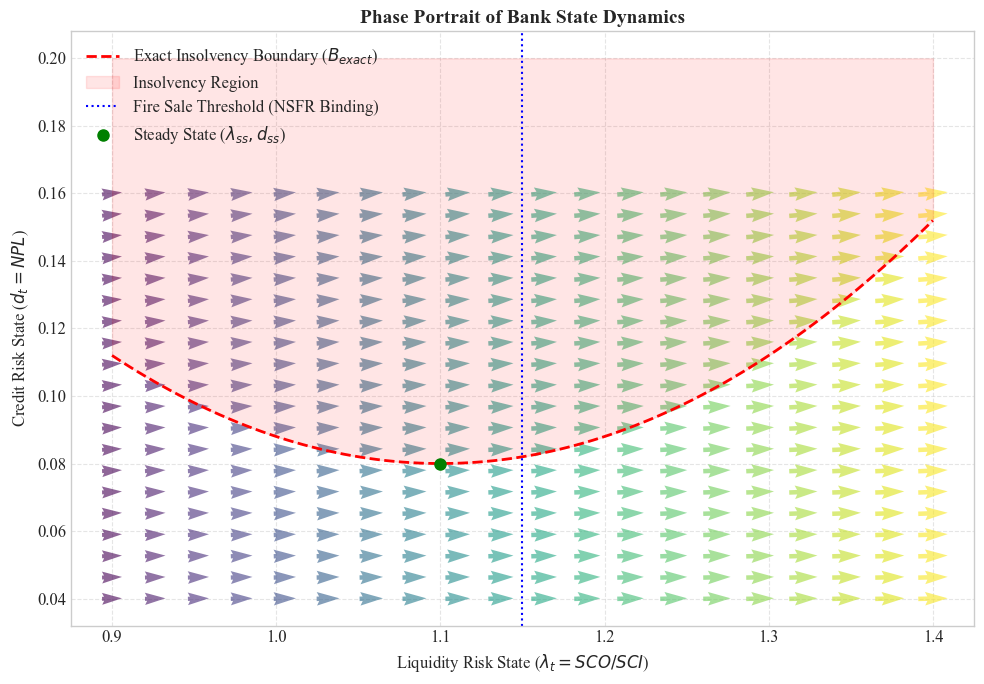} 
    \caption{Phase portrait of state variables. The insolvency boundary acts as a separatrix between stability and collapse.}
    \label{fig:phase_portrait}
\end{figure}

\subsection{Stress Testing Scenarios: Sanction vs. Liquidity Run}
\label{subsec:stress}

Using both the interactive structural dashboard and the surrogate function, we simulate two distinct stress scenarios relevant to emerging markets. These trajectories are overlaid on the exact insolvency boundary $B_{exact}$ in Fig.~\ref{fig:Surrogate}.

The first is a ``Sanction Shock'' (credit-dominant), modeled as an instantaneous 40\% drop in collateralized asset values ($\Delta d = 0.15$). The bank exhibits a vertical upward movement in the state space; despite holding initial liquidity, the sharp deterioration in NPLs drives it toward the insolvency threshold.

The second is a ``Liquidity Run'' (liquidity-dominant), characterized by a sudden spike in deposit withdrawals ($\Delta \lambda = 0.5$). This induces a horizontal shift. If the bank enters this shock with a robust LCR, it absorbs the initial stress via borrowing ($b_t > 0$). However, as verified by the structural dashboard, once the shock magnitude exceeds the NSFR buffer, Lemma \ref{lem:Threshold} triggers. The trajectory curves sharply toward the insolvency region due to the endogenous feedback effect ($\phi a_t$), demonstrating that joint shocks are significantly more dangerous than the sum of their isolated parts. It is worth noting that the trajectories illustrated in Fig. 7 represent individual sample paths generated by the underlying correlated geometric Brownian motion. The developed software package is fully equipped to scale this analysis, generating thousands of such paths via Monte Carlo simulations to estimate full probability distributions of insolvency under varying shock magnitudes.

\begin{figure}[H]
    \centering
    \includegraphics[width=0.85\textwidth]{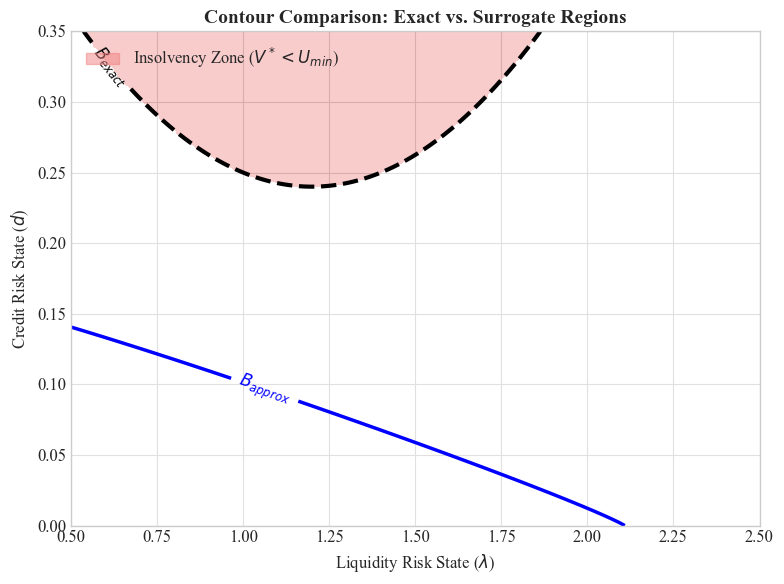}
    \caption{Stress testing scenarios: Sanction Shock vs. Liquidity Run mapped against the convex exact boundary $B_{exact}$.}
    \label{fig:Surrogate}
\end{figure}

\subsection{Robustness and Sensitivity Analysis}
\label{subsec:robustness}

We test the robustness of the boundary by perturbing key parameters ($\pm 50\%$).
\begin{itemize}
    \item \textbf{Correlation ($\rho_{LC}$):} As $\rho_{LC}$ increases, the boundary becomes more convex and shifts inward (closer to origin). This confirms that ignoring the correlation between risks (as in univariate models) overestimates the safe operating zone.
    \item \textbf{Liquidity Volatility ($\sigma_{\lambda}$):} Higher volatility expands the boundary, as a wider range of outcomes requires a larger buffer to ensure safety.
    \item \textbf{Regulatory Strictness ($W_{min}$):} Raising the minimum capital requirement shifts the boundary outward, forcing banks to hold larger buffers, reducing the probability of crisis but potentially constraining lending (a trade-off).
\end{itemize}

We also verified structural robustness by switching the utility function from Logarithmic to CRRA ($\gamma=2$). The qualitative shape of the boundary and the significance of the interaction term $w_3$ remain stable, confirming that our results are not artifacts of a specific functional form.

\begin{figure}[H]
    \centering
    \includegraphics[width=1.01\textwidth]{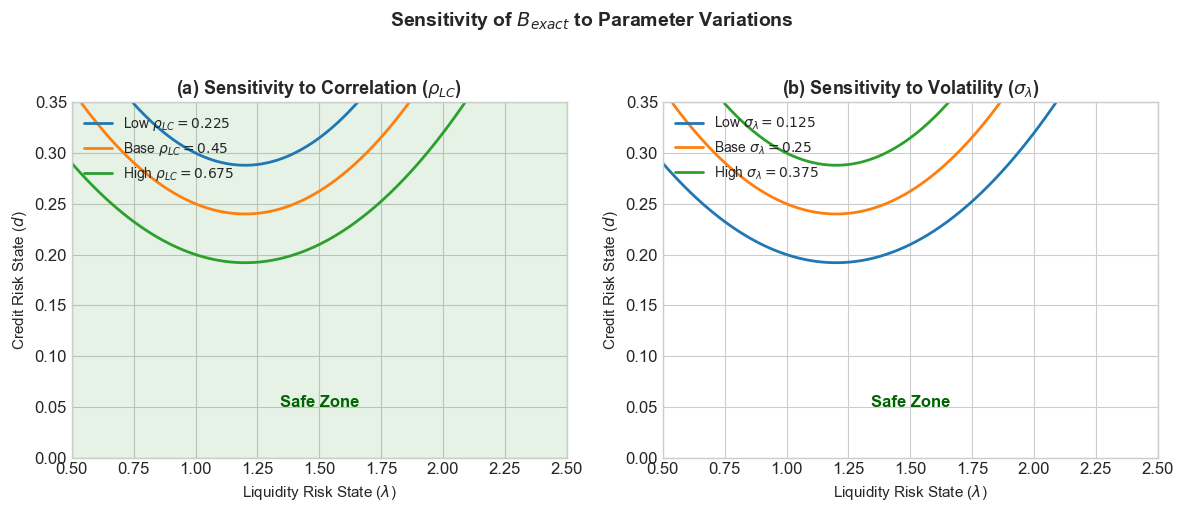}
    \caption{Sensitivity of $B_{exact}$ to variations in $\rho_{LC}$ and $\sigma_{\lambda}$.}
    \label{fig:robustness}
\end{figure}

\section{Conclusion and Policy Implications}
\label{sec:conclusion}

\subsection{Summary of Findings}
This paper developed a continuous-time stochastic model to determine the insolvency regions of banks arising from the interaction of liquidity and credit risks. By formulating the bank's management problem as a constrained optimal control problem, we proved that the non-linear convexity of the insolvency boundary is a direct mathematical consequence of the NSFR constraint forcing endogenous fire sales. We applied this framework to the Iranian banking sector, demonstrating that the model accurately captures the joint dynamics of LCR and NPLs.

\subsection{Policy Implications}
\label{subsec:policy}

Our model offers several actionable insights for supervisory authorities, particularly in emerging markets:
\begin{enumerate}
    \item \textbf{Early Warning Systems:} Supervisors can use the calibrated surrogate function $f(\lambda, d)$ as a real-time dashboard indicator. Banks whose state $(\lambda_t, d_t)$ approaches the contour $f(\lambda, d) = f_c$ should be flagged for immediate inspection, even if their static capital ratios appear adequate.
    \item \textbf{Dynamic Macroprudential Policy:} The convexity of the boundary implies that risk is non-linear. During calm periods, standard buffers (LCR/NSFR) may suffice. However, as correlations $\rho_{LC}$ rise (e.g., during a systemic shock), the "safe zone" shrinks rapidly. Regulators should implement counter-cyclical add-ons that tighten based on the market correlation between liquidity and credit indices.
    \item \textbf{Supervisory Stress Testing:} Current stress tests often treat liquidity and credit shocks independently. Our results show that \textit{joint} shocks are significantly more dangerous than the sum of their parts due to the feedback loop. Regulatory stress scenarios must be designed to trigger the binding of NSFR to test the true resilience of banks.
\end{enumerate}

\subsection{Limitations and Future Research}
\label{subsec:limitations}

This study is subject to several limitations. First, the empirical analysis relies on reported data from the CODAL system, and potential reporting lags or inconsistencies in accounting practices across banks may introduce measurement error into the calibration. Second, the model considers a representative bank in isolation and therefore abstracts from interbank linkages, balance-sheet interconnectedness, and contagion effects. In practice, fire sales by one institution may depress market prices and transmit solvency pressures to other banks, generating system-wide externalities; extending the framework to a multi-agent or network-based setting is therefore an important avenue for future research. Third, liquidity shocks are modeled as exogenous processes, whereas a richer formulation could endogenize depositor withdrawals by incorporating heterogeneous expectations, strategic complementarities, and game-theoretic run dynamics. Fourth, the model assumes full regulatory compliance and does not explicitly incorporate regulatory arbitrage or temporary constraint breaches that distressed institutions may undertake in order to delay liquidation or preserve short-term funding access. 

Notwithstanding these limitations, the framework provides a theoretically coherent and empirically tractable approach for identifying the boundaries of financial stability and for analyzing the interaction between liquidity stress, credit deterioration, and prudential regulation.

\section*{Acknowledgments}

This work is based upon research funded by Iran National Science Foundation (INSF) under project number No 4049385.

\section*{CRediT authorship contribution statement}

\textbf{Nader Karimi:} Software, Writing -- original draft, Visualization. \textbf{Davood Ahmadian:} Writing -- review \& editing.

\appendix
\section{Derivation of First-Order Conditions...}


\subsection*{Appendix A: Derivation of First-Order Conditions and Kuhn-Tucker}
\label{app:a}

To rigorously prove Lemma~\ref{lem:Threshold}, we define the instantaneous Hamiltonian $\mathcal{H}$. The value function is $V(W, \lambda, d, t)$. The regulatory constraints are $C_1 = LCR_t - 1 \ge 0$ and $C_2 = NSFR_t - 1 \ge 0$.

The Lagrangian is:
\begin{equation}
\mathcal{L} = \mathcal{H} + \mu^{LCR} C_1 + \mu^{NSFR} C_2
\end{equation}
where $\mu^{LCR}, \mu^{NSFR}$ are Kuhn-Tucker multipliers satisfying complementary slackness: $\mu^{LCR} C_1 = 0$ and $\mu^{NSFR} C_2 = 0$.

Taking derivatives w.r.t controls:
\begin{align}
\frac{\partial \mathcal{L}}{\partial b_t} &= r_b W V_W + \mu^{LCR} \frac{\partial LCR}{\partial b_t} + \mu^{NSFR} \frac{\partial NSFR}{\partial b_t} = 0 \label{eq:foc_b_app} \\
\frac{\partial \mathcal{L}}{\partial a_t} &= -\kappa a_t W V_W + \mu^{NSFR} \frac{\partial NSFR}{\partial a_t} + \phi d V_d = 0 \label{eq:foc_a_app}
\end{align}

\begin{proof}[Proof of Lemma~\ref{lem:Threshold}]
Consider the FOC for borrowing $b_t$ in Eq.~\eqref{eq:foc_b_app}. Borrowing ($b_t > 0$) typically increases HQLA (positive for LCR) but decreases Available Stable Funding (negative for NSFR). Thus, $\frac{\partial NSFR}{\partial b_t} < 0$.
When the NSFR constraint is slack, $\mu^{NSFR} = 0$. The FOC for $a_t$ (Eq.~\eqref{eq:foc_a_app}) becomes $-\kappa a_t W V_W + \phi d V_d = 0$. Since $V_W > 0$ and $V_d < 0$ (value decreases with credit risk), and $\phi > 0$, the term $\phi d V_d$ is negative. For the equation to hold with $-\kappa a_t W V_W \le 0$, we must have $a_t = 0$. Thus, the bank uses only borrowing.

When the NSFR constraint binds ($NSFR = 1$), $\mu^{NSFR} > 0$. The term $\mu^{NSFR} \frac{\partial NSFR}{\partial b_t}$ becomes a large negative number (since derivative is negative and multiplier is positive). This forces the borrowing FOC (Eq.~\eqref{eq:foc_b_app}) to be violated if $b_t$ increases, effectively setting $b_t \to 0$. To fill the funding gap, the bank must rely on $a_t$. In the FOC for $a_t$, $\mu^{NSFR} \frac{\partial NSFR}{\partial a_t}$ is positive (since asset sales reduce RSF and thus increase NSFR), allowing $a_t > 0$ to satisfy the equation by offsetting the negative terms. This proves the switching mechanism.
\end{proof}


\begin{thebibliography}{99}

\bibitem{Acharya2015}
Acharya, V. V., \& Mora, N. (2015).
A crisis of banks as liquidity providers.
\textit{The Journal of Finance, 70}(1), 1--43.

\bibitem{AyinuolaGumel2023}
Ayinuola, T. F., \& Gumel, B. I. (2023).
The nexus between liquidity and credit risks and their impact on bank stability.
\textit{Asian Journal of Economics, Business and Accounting, 23}(11), 15--27.

\bibitem{MahisiUsman2024}
Mahisi, P. P. W. N., \& Usman, B. (2024).
The effect of Basel III liquidity, credit risk, and market risk on the profitability of commercial banks in Indonesia.
\textit{Indonesian Interdisciplinary Journal of Sharia Economics (IIJSE), 7}(2), 3416--3440.

\bibitem{Onuoha2024}
Onuoha, D. U. (2024).
Stress testing bank financial systems: A technological perspective.
\textit{International Journal of Scientific Advances, 3}, 339--343.

\bibitem{NathEtAl2024}
Nath, S. D., Biswas, M. R., Maleque, M. A., \& Islam, M. M. (2024).
Effect of Basel III liquidity ratio LCR and NSFR on the profitability of commercial banks in Bangladesh.
\textit{International Journal of Economics \& Business Administration (IJEBA), 12}(3), 12--28.

\bibitem{ClercLecarpentierPouvelle2025}
Clerc, L., Lecarpentier, S., \& Pouvelle, C. (2025).
Basel III joint regulatory constraints: Interactions and implications for the financing of the economy.

\bibitem{SidhuEtAl2022}
Sidhu, A. V., Rastogi, S., Gupte, R., Rawal, A., \& Agarwal, B. (2022).
Net stable funding ratio (NSFR) and bank performance: A study of the Indian banks.
\textit{Journal of Risk and Financial Management, 15}(11), 527.

\bibitem{MontesiPapiro2018}
Montesi, G., \& Papiro, G. (2018).
Bank stress testing: A stochastic simulation framework to assess banks' financial fragility.
\textit{Risks, 6}(3), 82.

\bibitem{Rossi2015}
Rossi, L. (2015).
\textit{Endogenous firms' exit, inefficient banks and business cycle dynamics} (Working Paper No. 99[03-15]).
Universita di Pavia DEM Working Paper Series.

\bibitem{Asuako2025}
Asuako, K. A. (2025).
\textit{Fiscal dominance and bank fragility in frontier economies: Theory and evidence from Sub-Saharan Africa} (Doctoral dissertation, Southern Illinois University at Carbondale).

\bibitem{MaajiEtAl2026}
Ma’aji, M. M., Barnett, C., Bin-Nashwan, S. A., Roslan, N. H., \& Ali, R. A. (2026).
Risk-driven profitability: The role of bank capital, liquidity and credit in frontier banking markets.
\textit{Journal of Financial Regulation and Compliance, 34}(1), 60--80.

\bibitem{Belaid2017}
Belaid, F., Boussaada, R., \& Belguith, H. (2017).
Bank-firm relationship and credit risk: An analysis on Tunisian firms.
\textit{Research in International Business and Finance, 42}, 532--543.

\bibitem{Berrospide2021}
Berrospide, J. M. (2021).
Bank liquidity hoarding and the financial crisis: An empirical evaluation.
\textit{The Quarterly Journal of Finance, 11}(4), 2150020.

\bibitem{Brunnermeier2009}
Brunnermeier, M. K., \& Pedersen, L. H. (2009).
Market liquidity and funding liquidity.
\textit{Review of Financial Studies, 22}(6), 2201--2238.

\bibitem{Christensen2025}
Christensen, J. H. E., \& Steenkamp, D. (2025).
\textit{Joint estimation of liquidity and credit risk premia in bond prices with an application} (Working Paper No. WP/25/01).
South African Reserve Bank Working Paper Series.

\bibitem{Cont2020}
Cont, R., Kotlicki, A., \& Valderrama, L. (2020).
Liquidity at risk: Joint stress testing of solvency and liquidity.
\textit{Journal of Banking \& Finance, 118}, 105871.

\bibitem{Darvas2018}
Darvas, Z., \& Pichler, D. (2018).
\textit{Excess liquidity and bank lending risks in the euro area}.
European Parliament, Policy Department for Economic, Scientific and Quality of Life Policies.

\bibitem{Diamond1983}
Diamond, D. W., \& Dybvig, P. H. (1983).
Bank runs, deposit insurance, and liquidity.
\textit{Journal of Political Economy, 91}(3), 401--419.

\bibitem{Halaj2017}
Hałaj, G., \& Laliotis, D. (2017).
A top-down liquidity stress test framework.
In \textit{STAMP€: Stress-test analytics for macroprudential purposes in the euro area} (pp. 168--191).
European Central Bank.

\bibitem{He2013}
He, Z., \& Krishnamurthy, A. (2013).
Intermediary asset pricing.
\textit{American Economic Review, 103}(2), 732--770.

\bibitem{HosseiniNodeh2022}
Hosseini-Nodeh, Z., Khanjani-Shiraz, R., \& Pardalos, P. M. (2022).
Distributionally robust portfolio optimization with second-order stochastic dominance based on Wasserstein metric.
\textit{Information Sciences, 613}, 828--852.

\bibitem{HosseiniNodeh2023}
Hosseini-Nodeh, Z., Khanjani-Shiraz, R., \& Pardalos, P. M. (2023).
Portfolio optimization using robust mean absolute deviation model: Wasserstein metric approach.
\textit{Finance Research Letters, 54}, 103735.

\bibitem{Imbierowicz2014}
Imbierowicz, B., \& Rauch, C. (2014).
The relationship between liquidity risk and credit risk in banks.
\textit{Journal of Banking \& Finance, 40}, 242--256.

\bibitem{Liao2021}
Liao, Z., Zhang, H., Guo, K., \& Wu, N. (2021).
A network approach to the study of the dynamics of risk spillover in China's bond market.
\textit{Entropy, 23}(7), 920.

\bibitem{Merton1974}
Merton, R. C. (1974).
On the pricing of corporate debt: The risk structure of interest rates.
\textit{The Journal of Finance, 29}(2), 449--470.

\bibitem{Mousa2025}
Mousa, R., Nabil, J., Safty, A., Hassan, I., \& Ibrahim, Y. (2025).
Liquidity--credit risk dynamics and bank profitability: Hybrid econometric and machine learning evidence from MENA.
\textit{Journal of Financial Reporting and Accounting}.

\bibitem{Peykani2025}
Peykani, P., Sargolzaei, M., Tanasescu, C., Shojaie, S. E., \& Kamyabfar, H. (2025).
Investigating the relationship between liquidity risk, credit risk, and solvency risk in banks listed on the Iranian capital market: A panel vector error correction model.
\textit{Economies, 13}(5), 139.

\bibitem{Schularick2020}
Schularick, M., Steffen, S., \& Tröger, T. H. (2020).
\textit{Bank capital and the European recovery from the COVID-19 crisis} (CEPR Discussion Paper No. 14927).
CEPR.

\bibitem{Wang2023}
Wang, Y., Zhang, Y., \& Bashir, U. (2023).
Impact of COVID-19 on the contagion effect of risks in the banking industry: Based on transfer entropy and social network analysis method.
\textit{Risk Management, 25}(2), 1--41.

\bibitem{WojcikMazur2015}
Wójcik-Mazur, A., \& Szajt, M. (2015).
Determinants of liquidity risk in commercial banks in the European Union.
\textit{Argumenta Oeconomica, 2}(35), 25--47.

\end{thebibliography}
\end{document}